\theoremstyle{definition}
\newtheorem{definition}{Definition}
\theoremstyle{plain}
\newtheorem{theorem}{Theorem}[section]
\newtheorem{lemma}[theorem]{Lemma}
\theoremstyle{remark}
\Crefname{claim}{Claim}{Claims}
\def\np/{\textup{\textsf{NP}}}
\def\tfnp/{\textup{\textsf{TFNP}}}
\def\ppad/{\textup{\textsf{PPAD}}}
\def\fixp/{\textup{\textsf{FIXP}}}
\def\pcircuit{\textup{\textsc{Pure-Circuit}}\xspace}
\newcommand{\garbo}{\ensuremath{\bot}\xspace}
\newcommand{\val}[1]{\boldsymbol{\mathrm{x}}[#1]}
\newcommand{\valonly}{\boldsymbol{\mathrm{x}}}
\newcommand{\zerone}[1]{\{0,1\}^{#1}}
\newcommand{\Du}[1]{\Delta u_{#1}}
\newcommand{\PURE}{\textup{\textsf{PURIFY}}\xspace}
\newcommand{\NOR}{\textup{\textsf{NOR}}\xspace}
\newcommand{\supp}{\textup{\textsf{supp}}\xspace}
\newcommand{\nash}[1]{\ensuremath{#1}-NE\xspace}
\newcommand{\wn}[1]{\ensuremath{#1}-WSNE\xspace}
\newcommand{\tildu}{\tilde{u}}
\newcommand{\eps}{\ensuremath{\varepsilon}\xspace}
\newcommand{\vba}{\ensuremath{\vb{a}}\xspace}
\newcommand{\vbai}{\ensuremath{\vb{a}_{-i}}\xspace}
\newcommand{\vbs}{\ensuremath{\vb{s}}\xspace}
\newcommand{\vbsi}{\ensuremath{\vb{s}_{-i}}\xspace}
\newcommand{\bfx}{\ensuremath{\mathbf{x}}\xspace}
\newcommand{\bfs}{\ensuremath{\mathbf{s}}\xspace}
\newcommand{\OtherPurifyNode}[1]{c_{#1}}
\newcommand{\CentralPurifyNode}{{d}}
\title{Tight Inapproximability of Nash Equilibria\\ in Public Goods Games}
\author{
\begin{tabular}{cc}
& \\
 \textbf{Jérémi Do Dinh} & \textbf{Alexandros Hollender} \\
   \small{École Polytechnique Fédérale de Lausanne}
  & \small{University of Oxford}\\
 \href{mailto:jeremi.dodinh@epfl.ch}{\small{\texttt{jeremi.dodinh@epfl.ch}}} & \href{mailto:alexandros.hollender@cs.ox.ac.uk}{\small{\texttt{alexandros.hollender@cs.ox.ac.uk}}} \\
&
\end{tabular}
}
\date{}
\begin{document}

\maketitle

\begin{abstract}
\noindent We study public goods games, a type of game where every player has to decide whether or not to produce a good which is \emph{public}, i.e., neighboring players can also benefit from it. Specifically, we consider a setting where the good is indivisible and where the neighborhood structure is represented by a directed graph, with the players being the nodes. Papadimitriou and Peng (2023) recently showed that in this setting computing mixed Nash equilibria is \ppad/-hard, and that this remains the case even for $\eps$-well-supported approximate equilibria for some sufficiently small constant $\eps$. In this work, we strengthen this inapproximability result by showing that the problem remains \ppad/-hard for any non-trivial approximation parameter $\eps$.
\end{abstract}

Keywords: public good; game theory; Nash equilibrium; approximation; PPAD

\section{Introduction}
A good is said to be \emph{public} if everyone can simultaneously benefit from it, and not only the person or organization who expended money (or effort) for its provision. Examples of public goods include clean air and water, information on the internet (e.g., Wikipedia, open-access scientific papers), or television broadcasts.

Public goods give rise to interesting game theoretic settings due to the \emph{free-riding} phenomenon: an agent can benefit from a good without having to pay for it. Indeed, consider a setting in which every agent can decide whether to pay for the provision of the public good, but if at least one agent pays for the good, then everyone has access to it. Clearly, every agent is happier if the public good is available as opposed to unavailable, but ideally they would prefer some other agent to pay for its provision.

In many applications, the public good does not benefit everyone but rather only the agents in the proximity of the good. This is for example the case for clean air or water. A very natural way to model this ``locality'' is by assuming that the agents (or players) are nodes in a graph, and the neighborhood structure encodes which agents will benefit from a public good produced by a given agent~\cite{BramoulleK07-public-goods}. In other words, if a certain agent decides to produce the good, then all its neighbors, as well as the agent itself, will benefit from it. Various versions of such public goods games on graphs have been extensively studied~\cite{BramoulleK07-public-goods,GaleottiGJVY09-public-goods,DallAstaPR11-best-shot,LopezPintado13-public-goods,BramoulleKD14-strategic-networks,Allouch15-public-goods,BramoulleK16-games-networks,YuZBV20-public-goods,GilboaN22-public-goods,BayerKS23-public-goods,KlimmS23-public-goods,Gilboa23-public-goods}.

In this work, we focus on a particularly simple setting with indivisible public goods. We consider binary action public goods games with the $\max$ utility function, which are sometimes also called \emph{best-shot games}. In such a game, every player $i$ chooses an action $a_i \in \{0,1\}$, where $a_i = 1$ corresponds to producing the public good, and $a_i = 0$ to not producing it. The utility of a player is the maximum over the strategies of all the players in its neighborhood (including itself) minus the cost (or price) of producing the good. Formally
$$u_i(a_1,\dots,a_n) = \max_{j \in N(i) \cup \{i\}} \{a_j\} - p \cdot a_i,$$
where $N(i)$ denotes the neighborhood of player $i$, and $p \in (0,1)$ is the price of producing the good. Unless stated otherwise, we fix $p=1/2$ in the rest of this discussion.

When the graph is undirected, pure Nash equilibria always exist, and it is not hard to see that they exactly correspond to \emph{maximal} independent sets of the graph~\cite{BramoulleK07-public-goods}. In particular, a pure Nash equilibrium can be computed efficiently.

In some settings, interactions are not necessarily symmetric. In that case, it is more suitable to model the interactions using a \emph{directed} graph. A good example is the spread of information in social networks, which often have a directed structure. When the graph is directed, a pure Nash equilibrium is no longer guaranteed to exist~\cite{BramoulleK16-games-networks} and Papadimitriou and Peng~\cite{PapadimitriouP23-public-goods} have shown that it is \np/-complete to decide whether one exists (and thus also to compute one).

In view of the non-existence of pure equilibria, it is natural to consider \emph{mixed} Nash equilibria, which are guaranteed to exist even in the directed setting. Unfortunately, it turns out that mixed equilibria are also hard to compute, and even to approximate. Papadimitriou and Peng~\cite{PapadimitriouP23-public-goods} prove that it is \ppad/-hard to compute an $\eps$-well-supported approximate Nash equilibrium, for some sufficiently small (unspecified) constant $\eps > 0$. To show this, they introduce a new type of game called \emph{threshold game} and prove that computing an approximate equilibrium in such a game reduces to the problem of computing an approximate equilibrium of a public goods game. The hardness of computing approximate equilibria for threshold games is then shown by reducing from the \emph{generalized-circuit} problem~\cite{ChenDT09-Nash} which is known to be \ppad/-hard to solve approximately for some sufficiently small (unspecified) constant $\eps > 0$~\cite{Rubinstein18-Nash-inapproximability}.

Recent work by Deligkas et al.~\cite{DeligkasFHM22-pure-circuit} has introduced a new tool, called the \pcircuit problem, for proving \ppad/-hardness for \emph{explicit} large constant approximation factors. In particular, using \pcircuit, they were able to show that computing an $\eps$-approximate equilibrium of a threshold game is \ppad/-hard for any $\eps < 1/6$. This is in fact tight, as there exists a simple polynomial-time algorithm that computes a $1/6$-approximate equilibrium.

Plugging this strengthened inapproximability result for threshold games into the reduction of Papadimitriou and Peng yields an improved inapproximability result for public goods games. Namely, it follows that computing an $\eps$-well-supported approximate Nash equilibrium of a binary public goods game is \ppad/-hard for any $\eps < 1/48$. In particular, the reduction from threshold games to public goods games loses a factor $8$ in the approximation term.

\paragraph*{Our contribution.}
In this paper, we prove that computing an $\eps$-well-supported approximate Nash equilibrium of a binary public goods game remains \ppad/-hard for any $\eps < 1/2$. More generally, for any price parameter $p \in (0,1)$, we show \ppad/-hardness for any $\eps < \min \{p,1-p\}$ (\cref{thm:pgg-elementary}). This result is tight, as a trivial $\min \{p,1-p\}$-well-supported approximate equilibrium can easily be found (see \cref{lemma:trivial-bounds}).

To prove our result, it is necessary to bypass threshold games, since the existing inapproximability for that problem is already tight. Instead, we provide a direct reduction from the \pcircuit problem to our problem of interest.

\section{Preliminaries}
In this section, we introduce the notions needed to state and prove our main theorem. Our results are related to two domains of study: algorithmic game theory and the study of \ppad/, an important subclass of \tfnp/. Therefore this section is organized into two subsections. Firstly, we define public goods games, along with the related game theoretic notions and the associated computational problems. Secondly, we define the \pcircuit problem, which serves as the \ppad/-complete problem used to demonstrate the main hardness result.

\subsection{Public Goods Games}
We now define the important notions concerning public goods games, along with the associated game theoretic notions and the relevant computational problems. 

\begin{definition}[Game]\label{def:game}
    A \textit{game} is defined by a set of players $ [n] $ and a set of actions (pure strategies) $ A_i $ available to each player $i \in [n]$. Each player additionally has a utility function $ u_i: A_1 \times \dots \times A_n \rightarrow \mathbb{R} $, by which they obtain a payoff depending on the action choice of all the players. A \textit{pure strategy profile} is a tuple $\vba \in A_1 \times \dots \times A_n$, representing the choice of actions for each of the players.
\end{definition}

Throughout this work we use the standard game theoretic notation, with:
\[
\vbai = (a_1, \dots, a_{i-1}, a_{i+1}, \dots, a_n),
\]
and
\[
(k, \vbai) = (a_1, \dots, a_{i-1},k, a_{i+1}, \dots, a_n).
\]

\begin{definition}\label{def:indiv-pgg}
	An \textit{indivisible public goods game in a directed graph} is given by a vertex set $ V = [n] $ corresponding to the players, a set $ E $ of directed edges, as well as a price $ p \in (0,1) $.
    Each player has the same action (pure strategy) set $ A = \{0,1\} $, corresponding to the choice of whether or not to produce the good. The \textit{payoff} of player $i$ when playing strategy $k$ is:
\[
u_i(k, \vbai) = \begin{cases}
    1-p & \text{if } k = 1\\
    1 & \text{if $k = 0 $ and $ \exists j \in N(i)$ s.t. $a_j = 1$} \\
    0 & \text{otherwise,}
\end{cases}
\]
where the \textit{neighborhood} for any player $ i \in V $ is defined to be:
\[ N(i) = \{j \mid (j, i) \in E \}.
\]
\end{definition}

\begin{definition}[Mixed strategy]\label{def:mixed-strategy-pgg}
    A \textit{mixed strategy} is a player's choice of a distribution over their set of actions. In a public goods game, this corresponds to a value $s_i \in [0,1]$ assigned to each player representing the probability that player $i$ will produce the good. 

    A \textit{strategy profile} is a choice of a mixed strategy (just \textit{strategy} from hereon) for each of the players. In a public goods game, it is a tuple $\vbs \in  [0,1]^n  = \Sigma $ ($\Sigma$ denoting the set of all possible strategy profiles).
\end{definition}

\begin{lemma}
    Given a mixed strategy profile $ \vbs \in [0,1]^n $, the expected payoff of player $i$ is $\tildu_i(\vbs) = s_i \cdot \tildu_i(1, \vbsi) + (1-s_i) \cdot \tildu_i(0, \vbsi)$, where for $k \in \{0,1\}$:
	 \[
	 \tildu_i(k, \vbsi)	=  
	 \begin{cases}
	 	1-p & k = 1 \\
	 	1 - \prod_{j \in N(i)}(1 - s_j) & k = 0.
	 \end{cases}
	 \]
\end{lemma}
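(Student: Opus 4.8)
The plan is to compute the expectation directly from the definition of mixed strategies, relying on the fact that the players randomize independently. Recall that a mixed strategy profile $\vbs$ specifies, for each player $j$, an independent Bernoulli random variable $A_j$ with $\Pr[A_j = 1] = s_j$ and $\Pr[A_j = 0] = 1 - s_j$; the random action profile is $\vba = (A_1,\dots,A_n)$, and the expected payoff of player $i$ is $\tildu_i(\vbs) = \mathbb{E}[u_i(\vba)]$.

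First I would condition on the action of player $i$ itself. By the law of total expectation,
\[
\tildu_i(\vbs) = \Pr[A_i = 1] \cdot \mathbb{E}[u_i(1,\vbai) \mid A_i = 1] + \Pr[A_i = 0] \cdot \mathbb{E}[u_i(0,\vbai) \mid A_i = 0].
\]
Since the $A_j$ are mutually independent, conditioning on $A_i$ does not change the distribution of $\vbai = (A_1,\dots,A_{i-1},A_{i+1},\dots,A_n)$, so these two conditional expectations are just $\tildu_i(1,\vbsi) := \mathbb{E}[u_i(1,\vbai)]$ and $\tildu_i(0,\vbsi) := \mathbb{E}[u_i(0,\vbai)]$. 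Together with $\Pr[A_i = 1] = s_i$ this yields the claimed identity $\tildu_i(\vbs) = s_i \cdot \tildu_i(1,\vbsi) + (1-s_i)\cdot \tildu_i(0,\vbsi)$, so it remains to evaluate the two quantities $\tildu_i(k,\vbsi)$.

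For $k = 1$, \cref{def:indiv-pgg} gives $u_i(1,\vbai) = 1 - p$ regardless of $\vbai$, hence $\tildu_i(1,\vbsi) = 1-p$. For $k = 0$, the payoff $u_i(0,\vbai)$ equals $1$ if there exists $j \in N(i)$ with $A_j = 1$, and $0$ otherwise; that is, $u_i(0,\vbai) = \mathbb{1}[\exists j \in N(i): A_j = 1]$. Therefore
\[
\tildu_i(0,\vbsi) = \Pr[\exists j \in N(i): A_j = 1] = 1 - \Pr[\forall j \in N(i): A_j = 0] = 1 - \prod_{j \in N(i)}(1-s_j),
\]
where the last equality uses the independence of the $A_j$ across $j \in N(i)$. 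Substituting both expressions into the identity above completes the proof. There is no real obstacle here; the only point requiring (brief) care is to invoke independence twice — once to drop the conditioning on $A_i$, and once to factor the probability that no in-neighbor produces the good — and to note that the product over an empty neighborhood is $1$, consistent with the fact that a player with no in-neighbors gets payoff $0$ when playing $0$.
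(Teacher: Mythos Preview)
Your proof is correct and follows essentially the same approach as the paper: condition on player $i$'s own action and evaluate the two cases $k=1$ and $k=0$ directly, using independence to compute the probability that at least one in-neighbor produces the good. Your write-up is somewhat more explicit (invoking the law of total expectation and noting the empty-product convention), but there is no substantive difference in method.
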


\begin{proof}
We consider the two cases corresponding to the choice of pure strategy $k$ of player $i$. 
\paragraph{Case: $k = 1$.} In this scenario, player $i$ is unaffected by the choice of strategy of the players in $N(i)$, and instead, the player benefits from the good while also incurring the cost of production. Therefore, the expected payoff is $1-p$.
\paragraph{Case: $k = 0$.} In this scenario, player $i$'s expected payoff is the probability that at least one of the players in $N(i)$ produces the good. This corresponds to the complement of the probability that none of the players in $N(i)$ produce the good, which is $\prod_{j \in N(i)}(1 - s_j)$.
\end{proof}

\begin{definition}[$\eps$-best-response]
    Given $\eps \geq 0$ and a strategy profile for the other players $ \vbsi $, we say that a strategy $s \in [0,1]$ is an $\eps$-\textit{best-response} of player $ i $ to $ \vbsi $, if it maximizes, up to $\eps$, the expected payoff given that the other players play $\vbsi$:
\[
\tildu(s,\vbsi) \geq \max_{s' \in [0,1]}\tilde{u}(s', \vbsi) - \eps. 
\]
If $\eps = 0$, then we simply use the term \emph{best-response}.
Note that there always exists a pure strategy $k \in \{0,1\}$ that is a best-response.
\end{definition}

\begin{definition}[Nash equilibria and approximations]\label{def:neq}
    Consider a strategy profile $ \mathbf{s} = (s_1, \dots, s_n) \in [0,1]^n $. We call $\mathbf{s}$ a \textit{Nash equilibrium} if no player can improve their expected payoff by choosing a different strategy. In other words, every player is playing a best-response:
\[
\forall i \in [n], \forall s'_i \in [0,1], \tilde{u}_i(\vbs) \geq \tilde{u}_i(s'_i, \vbsi).
\]
We say that $\mathbf{s}$ is an \eps-\textit{Nash equilibrium} (\nash{\eps}) if each of the players' expected payoffs is within \eps of their best-response payoff, i.e., they all play an $\eps$-best-response. That is:
\[
\forall i \in [n], \tilde{u}_i(\mathbf{s}) \geq \max_{k \in \{0,1\}} \tilde{u}(k, \vbsi) - \eps.
\]
Finally, $\mathbf{s}$ is an $ \eps $-\textit{Well Supported Nash Equilibrium} (\wn{\eps}) if each player places a positive probability exclusively on pure strategies that are $ \eps $-best-responses. That is:
\[
\forall i \in [n], \forall k \in \supp{(s_i)}, \quad u_i(k, \vbsi) \geq \max_{k' \in \{0,1\}} \tilde{u}(k', \vbsi) - \eps,
\] 
where $ \supp{(s_i)} = \Big \{j \in \zerone{} : (s_i)_j >0 \Big \} $. 
\end{definition}

\begin{definition}[Utility difference]
The utility difference for a player $ i $ is the difference in the expected payoff for a player between their choice of pure strategies:
\[
\Delta u_i = \tildu_i(1, \vbsi) - \tildu_i(0, \vbsi) = \prod_{j \in N_i}(1 - s_j) - p.
\]
\end{definition}

\paragraph{Observation.}In an \eps-WSNE, $ \Delta u_i $ dictates precisely which strategy player $ i $ would play. In particular:
\begin{itemize}
	\item if $ \Delta u_i > \eps $, then strategy $ 1 $ is the only pure \eps-best-response for player $ i $, and therefore $ s_i = 1 $.
	\item if $ \Delta u_i < -\eps $, then strategy $ 0 $ is the only pure \eps-best-response for player $ i $, and therefore $ s_i = 0 $.
	\item if $ \Delta u_i \in [-\eps, \eps] $, then both strategies $ 0 $ and $ 1 $ are \eps-best-responses for player $ i $, and therefore $ s_i $ can be anything in $ [0,1] $.
\end{itemize}

\begin{lemma}\label{lemma:trivial-bounds}
    If $ \eps \geq \min\{p, 1-p\} $, the game admits a trivial \eps-WSNE.
\end{lemma}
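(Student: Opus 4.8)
The plan is to exhibit one of the two ``obvious'' pure strategy profiles --- the all-ones profile $\vbs=(1,\dots,1)$ or the all-zeros profile $\vbs=(0,\dots,0)$ --- and to check directly, using the observation on $\Du{i}$ stated just above, that it is an \wn{\eps}. The only fact needed about the instance is that $\prod_{j\in N(i)}(1-s_j)\in[0,1]$ for every strategy profile, so that $\Du{i}=\prod_{j\in N(i)}(1-s_j)-p$ always lies in $[-p,\,1-p]$; evaluating at the two extreme profiles then gives especially clean values.

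\emph{Case $p\le 1/2$ (so $\min\{p,1-p\}=p$ and hence $\eps\ge p$).} Here I would take the all-ones profile, in which every player $i$ places all its probability on the pure strategy $1$. By the observation it suffices to verify that strategy $1$ is an $\eps$-best-response for each $i$, i.e.\ that $\Du{i}\ge-\eps$ at this profile. Plugging in $s_j=1$ for all $j$ makes $\prod_{j\in N(i)}(1-s_j)$ equal to $0$ when $N(i)\neq\emptyset$ and equal to the empty product $1$ when $N(i)=\emptyset$; either way $\Du{i}\ge 0-p=-p\ge-\eps$. Hence strategy $1$ is an $\eps$-best-response for every player and the all-ones profile is an \wn{\eps}.

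\emph{Case $p\ge 1/2$ (so $\min\{p,1-p\}=1-p$ and hence $\eps\ge 1-p$).} Symmetrically, I would take the all-zeros profile, in which every player plays the pure strategy $0$; by the observation it suffices to check that $\Du{i}\le\eps$. Plugging in $s_j=0$ for all $j$ gives $\prod_{j\in N(i)}(1-s_j)=1$, so $\Du{i}=1-p\le\eps$, and strategy $0$ is an $\eps$-best-response for every player. Since the hypothesis $\eps\ge\min\{p,1-p\}$ places us in at least one of the two cases, the lemma follows. I do not expect any genuine obstacle here: the only point requiring a moment's care is the empty-neighborhood (empty-product) corner case, but there $\Du{i}$ only becomes \emph{larger}, which is harmless for both of the inequalities used above.
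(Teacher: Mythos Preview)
Your proposal is correct and follows essentially the same approach as the paper: exhibit the all-ones profile when $\eps\ge p$ and the all-zeros profile when $\eps\ge 1-p$, then verify via the observation on $\Du{i}$ that the relevant pure action is an $\eps$-best-response for every player. The paper phrases the bound on $\Du{i}$ slightly more globally (``it is impossible for $\Du{i}<-\eps$'' resp.\ ``$>\eps$''), while you evaluate $\Du{i}$ at the specific profile and handle the empty-neighborhood case explicitly, but the argument is the same.
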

\begin{proof}
    To demonstrate this, we consider the two sub-cases of the scenario where  $ \eps \geq \min\{p, 1-p\} $ and show that a trivial equilibrium point exists in these situations. Namely:
\begin{itemize}
	\item if $ \eps \geq p = \min\{p, 1-p\} $ then it is impossible for $ \Delta u_i < -\eps $ and therefore an \eps-WSNE is trivially obtained by having all the players produce the good.
	\item if $ \eps \geq 1 - p = \min\{p, 1-p\} $ then it is impossible for $ \Delta u_i > \eps $ and therefore an \eps-WSNE is trivially obtained with none of the players producing the good. 
\end{itemize}
Thus, we obtain a trivial equilibrium in both cases.
\end{proof}

\subsection{\pcircuit}
For the purposes of the main result presented here, we consider instances of \pcircuit that make use of the \NOR and \PURE gates. We now provide the corresponding definition of \pcircuit. 
\begin{definition}[\pcircuit \cite{DeligkasFHM22-pure-circuit}]\label{def:pcircuit}
An instance of the \pcircuit problem is given by a set of nodes $V=[n]$ and a set $G$ of gates. Each gate $g \in G$ is of the form $g = (T,u,v,w)$ where $u,v,w \in V$ are distinct nodes and $T \in \{\NOR, \PURE\}$ is the type of the gate, with the following interpretation.
\begin{itemize}
	\item If $T=\NOR$, then $u$ and $v$ are the inputs of the gate, and $w$ is its output.
	\item If $T=\PURE$, then $u$ is the input of the gate, and $v$ and $w$ are its outputs.
\end{itemize}
We require that each node is the output of exactly one gate.

A solution to an instance $(V,G)$ is an assignment $\valonly: V \to \{0,1,\garbo\}$ that satisfies all the gates, i.e., for each gate $g=(T,u,v,w) \in G$ we have:
\begin{itemize}
	\item if $T=\NOR$, then $\valonly$ satisfies
			\begin{align*}
				\val{u} = \val{v} = 0 \implies \val{w} = 1\\
				(\val{u} = 1) \lor (\val{v} = 1) \implies \val{w} = 0
			\end{align*}
	
	\item if $T=\PURE$, then $\valonly$ satisfies
			\begin{align*}
				\{\val{v}, \val{w}\} \cap \{0,1\} \neq \emptyset\\
				\val{u} \in \{0,1\} \implies \val{v} = \val{w} = \val{u}
			\end{align*}
\end{itemize}
\end{definition}
We now also state the central theorem from \cite{DeligkasFHM22-pure-circuit} related to the \pcircuit problem, and necessary to our main result.

\begin{theorem}[{{\cite{DeligkasFHM22-pure-circuit}}}]
  The \pcircuit problem with \NOR and \PURE gates is \ppad/-complete.  
\end{theorem}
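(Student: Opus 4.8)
\noindent Although this theorem is quoted directly from~\cite{DeligkasFHM22-pure-circuit} and is used in the next section as a black box, here is the route I would take to prove it. There are two obligations: membership in \ppad/, and \ppad/-hardness.

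\emph{Membership.} First note that \pcircuit lies in \tfnp/: a candidate assignment can be verified gate by gate in polynomial time, and the real content is that a valid assignment always exists. The plan is to obtain totality and \ppad/-membership in one stroke by reducing to the computational Brouwer fixed-point problem, which is known to lie in \ppad/. I would encode the three symbols by reals, reading $t \in [0,1]$ as $0$ when $t = 0$, as $1$ when $t = 1$, and as \garbo for every strictly interior value, and then build a continuous piecewise-linear map $F \colon [0,1]^n \to [0,1]^n$ with polynomially many pieces whose coordinate for a node $w$ recomputes the output of the unique gate producing $w$. For a \NOR gate with inputs $u, v$ and output $w$, I would let that coordinate be a clamped affine expression equal to $1$ exactly when $t_u = t_v = 0$ and dropping to $0$ as soon as an input leaves $0$. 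For a \PURE gate with input $u$ and outputs $v, w$, I would use a small two-dimensional gadget on the pair $(t_v, t_w)$: when $t_u$ is near $0$ (resp.\ near $1$) it drifts toward $(0,0)$ (resp.\ toward $(1,1)$), so that a pure input is faithfully copied to both outputs; and when $t_u$ is interior it behaves like the boundary-seeking drifts used in Brouwer-hardness constructions, forcing every fixed point to place at least one of $t_v, t_w$ in $\{0,1\}$. Since each node is the output of exactly one gate, $F$ is well defined, and rounding any Brouwer fixed point of $F$ yields a valid assignment; this establishes both totality and the reduction.

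\emph{Hardness.} The plan is to reduce from the $\eps$-generalized-circuit problem, which is \ppad/-hard for some suitably small constant $\eps > 0$. The guiding observation is that any instance built only from \NOR gates is trivially solvable --- the all-\garbo assignment satisfies every \NOR constraint --- so the entire fixed-point difficulty must be injected through the \PURE gate, whose first requirement (at least one output is a genuine bit) is exactly the discrete shadow of Brouwer's theorem, and whose ``free'' behaviour on a \garbo input mirrors the adversarially chosen outputs that a generalized-circuit gate may produce inside its error band. Concretely, I would represent each generalized-circuit line by a single \pcircuit node --- assigned $0$ or $1$ when the value is safely below or above a threshold, and \garbo inside the ambiguous band --- simulate the arithmetic and comparison gates by Boolean gadgets built from \NOR (which already yields negation, disjunction, conjunction, and fan-out), and insert \PURE gates exactly where a line must commit to a bit, so that a solution of the resulting \pcircuit instance reads off a solution of the generalized circuit. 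A convenient warm-up is to first establish hardness for \pcircuit with a richer gate set and then show that \NOR and \PURE suffice to simulate every other gate.

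\emph{Expected main obstacle.} In both directions the crux is the \PURE gate. For membership, the delicate point is engineering the two-dimensional gadget so that its only fixed points are the ``purified'' configurations, while it still faithfully copies pure inputs. For hardness, the delicate point is verifying that this absorbing form of purification --- where \garbo soaks up ambiguity rather than a small numerical error --- is, together with \NOR, expressive enough to capture the fixed-point hardness of a generalized circuit with no approximation slack. Everything involving \NOR alone I expect to be routine Boolean bookkeeping; for the full construction I refer to~\cite{DeligkasFHM22-pure-circuit}.
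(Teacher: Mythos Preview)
The present paper does not prove this theorem at all: it is quoted from~\cite{DeligkasFHM22-pure-circuit} and used purely as a black box in the reduction of \cref{thm:pgg-elementary}. There is therefore no in-paper argument to compare your sketch against, and your opening sentence already acknowledges this.

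On the sketch itself: the membership direction is a reasonable plan. A piecewise-linear Brouwer encoding with a bespoke two-output gadget for \PURE is the natural route, and you correctly flag that gadget as the delicate part.

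The hardness direction, however, has a genuine gap. You propose to represent each generalized-circuit wire by a \emph{single} \pcircuit node, mapping its real value to $0$, $1$, or $\garbo$ by thresholding, and then to ``simulate the arithmetic and comparison gates by Boolean gadgets built from \NOR.'' But a single ternary symbol cannot carry the information contained in a value in $[0,1]$: if two input wires both land in the ambiguous band and are encoded as $\garbo$, a downstream addition or scaling gate has nothing to work with, and no Boolean gadget on those two symbols can recover anything about the intended sum or scaled value. Put differently, thresholding every wire collapses the continuous structure that makes $\eps$-generalized-circuit hard in the first place; the \PURE gate forces \emph{some} node to commit to a bit, but that is far from enough to reconstruct a consistent $\eps$-approximate real-valued solution across the whole circuit. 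At a minimum one would need a multi-bit encoding of each wire (with precision depending on $\eps$) together with an argument that \PURE can be deployed to force enough of those bits to be genuine so that Boolean arithmetic on the encodings propagates correctly --- and that is exactly where the real work lies, not in the \NOR bookkeeping. The proof in~\cite{DeligkasFHM22-pure-circuit} does not follow the one-node-per-wire route you describe; if you want a self-contained argument you should consult that paper directly.
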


\section{Reduction from \pcircuit}
In \cite[Section 4]{DeligkasFHM22-pure-circuit} reductions from the \pcircuit problem were demonstrated for the \eps-generalized-circuit problem and for finding various equilibria points in polymatrix games, as well as threshold games. The last reduction is most relevant to this work, firstly because the setting for threshold games is similar to the setting of public goods games (we can think of public goods games as a multiplicative version of threshold games), but also because it was precisely threshold games which were used as an intermediate problem to show \ppad/-hardness of public goods games in~\cite{PapadimitriouP23-public-goods}.

We proceed to provide the main result, to establish the optimal bound for $ \eps $, which constitutes the main contribution of this work.
\begin{theorem}\label{thm:pgg-elementary}
	For any $ p \in (0,1) $ and any constant $ \eps < \min\{p, 1-p\} $, it is \ppad/-hard to find an \wn{\eps} of binary public good games on directed graphs with price $p$. 
\end{theorem}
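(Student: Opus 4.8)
The plan is to reduce directly from the \pcircuit problem with \NOR and \PURE gates. Given an instance $(V,G)$, I build a public goods game on a directed graph in which every node $x\in V$ becomes a ``main'' player, and each gate contributes a gadget made of a constant number of auxiliary players; the intended correspondence is $s_x=0\leftrightarrow\val{x}=0$, $s_x=1\leftrightarrow\val{x}=1$, and $s_x\in(0,1)\leftrightarrow\val{x}=\garbo$. Because every game has a Nash equilibrium (hence a $0$-WSNE, hence an \wn{\eps}) and \pcircuit is total, it suffices to show that the construction runs in polynomial time and is \emph{sound}: reading off any \wn{\eps} through the above correspondence yields an assignment satisfying every gate. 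The hypothesis $\eps<\min\{p,1-p\}$ enters only through $p-\eps>0$ and $p+\eps<1$: since $\Du{i}=\prod_{j\in N(i)}(1-s_j)-p$, in any \wn{\eps} a product equal to $1$ forces $s_i=1$, a product equal to $0$ forces $s_i=0$, and $s_i\in(0,1)$ forces $\prod_{j\in N(i)}(1-s_j)\in[p-\eps,p+\eps]$.

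\NOR gates are immediate: for $(\NOR,u,v,w)$ I simply set $N(w)=\{u,v\}$ among the main players, with no auxiliary players. If $\val{u}=\val{v}=0$ the product is $1>p+\eps$, forcing $s_w=1$; if $\val{u}=1$ or $\val{v}=1$ the product is $0<p-\eps$, forcing $s_w=0$; the remaining input patterns are unconstrained by a \NOR gate, so this is sound (a node that feeds several gates just becomes an in-neighbour of several players). Two building blocks recur below: a player with empty neighbourhood has product $1$ and is forced to $s=1$, and its ``negation'' (a player having that player as sole in-neighbour) is forced to $s=0$, so constants are available; and a player with a single in-neighbour $y$ acts like a threshold-negation — $s_y=0$ forces $s=1$ and $s_y=1$ forces $s=0$, while if $s_y$ lies in the narrow band $B:=[1-p-\eps,\,1-p+\eps]\subseteq(0,1)$ its strategy is free, and conversely if this player is fractional then $s_y\in B$, so $y$ is itself fractional.

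The core of the reduction is the \PURE gadget, \PURIFY, which for $(\PURE,u,v,w)$ must guarantee in every \wn{\eps} both (b) $s_u\in\{0,1\}\implies s_v=s_w=s_u$ and (a) $\{s_v,s_w\}\cap\{0,1\}\neq\emptyset$. Part~(b) alone is easy: route $u$ into $v$ and into $w$ through chains of negations, so that a Boolean input is copied to both outputs. Part~(a) is the difficulty, because one cannot force a single player to be Boolean — the gadget must make the \emph{joint} event ``$s_v$ and $s_w$ both fractional'' impossible. My plan is to connect $v$ and $w$ to a network of auxiliary players $\OtherPurifyNode{1},\dots,\OtherPurifyNode{m}$ and a central player $\CentralPurifyNode$ providing two effects. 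First, \emph{propagation of fractionality}: the $\OtherPurifyNode{i}$ lie on negation-chains attached to $v$ and to $w$, so that, iterating the last building block above, the assumption that both $v$ and $w$ are fractional pins every $\OtherPurifyNode{i}$ into the narrow band $B$. Second, \emph{amplification at $\CentralPurifyNode$}: the central player has $\OtherPurifyNode{1},\dots,\OtherPurifyNode{m}$ as in-neighbours, so under that assumption its product is at most $(p+\eps)^{m}$; choosing the constant $m$ large enough — possible precisely because $p+\eps<1$ — makes $(p+\eps)^m<p-\eps$, forcing $\CentralPurifyNode$ to the Boolean value $0$. A negation of $\CentralPurifyNode$ is fed back into the negation-chains, so that $\CentralPurifyNode=0$ pins one of the $\OtherPurifyNode{i}$ to a Boolean value, contradicting $\OtherPurifyNode{i}\in B$; hence $v$ and $w$ cannot both be fractional. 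The step I expect to be the main obstacle is designing and verifying one wiring of the $\OtherPurifyNode{i}$, $\CentralPurifyNode$, $u$, $v$, $w$ that simultaneously realises all of this and still enforces part~(b); this comes down to a finite case analysis on which players are Boolean, fractional, or in $B$, driven entirely by $0<p-\eps$, $p+\eps<1$ and the multiplicativity of the payoff.

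Finally I would overlay all the gate gadgets on the shared main players, observe that the resulting game has size polynomial in that of $(V,G)$, and conclude: any \wn{\eps} induces an assignment $\valonly:V\to\{0,1,\garbo\}$ which, by soundness of the \NOR and \PURIFY gadgets, satisfies every gate of $(V,G)$, hence solves it. As both problems are total, this is a valid reduction, giving \ppad/-hardness of computing an \wn{\eps} for every $p\in(0,1)$ and every constant $\eps<\min\{p,1-p\}$, matching the trivial bound of \cref{lemma:trivial-bounds}.
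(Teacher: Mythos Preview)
Your plan coincides with the paper's: reduce from \pcircuit with \NOR and \PURE gates, implement \NOR by the two edges $(u,w),(v,w)$, and build \PURE from a chain of single-in-neighbour ``negations'' together with a central amplifier $\CentralPurifyNode$ whose many in-neighbours, once pinned into $B=[1-p-\eps,1-p+\eps]$, force $\Du{\CentralPurifyNode}\le(p+\eps)^m-p<-\eps$ and hence $s_\CentralPurifyNode=0$, with $m$ a constant depending only on $p,\eps$.

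The one place where you work harder than the paper is the closing step of the \PURE gadget, which you correctly flag as your main obstacle. You propose two negation chains (one to $v$, one to $w$) plus a feedback edge from a negation of $\CentralPurifyNode$ back into a chain, then a contradiction argument. The paper sidesteps all of this: it uses a \emph{single} chain $u\to\OtherPurifyNode{1}\to\OtherPurifyNode{2}\to\cdots\to\OtherPurifyNode{2l+1}\to w$, gives $\CentralPurifyNode$ the in-neighbours $u,\OtherPurifyNode{2},\OtherPurifyNode{4},\dots,\OtherPurifyNode{2l}$, and simply sets $N(v)=\{\CentralPurifyNode\}$. Then $s_u\in\{0,1\}$ propagates down the chain to give $s_w=s_u$ and forces $s_\CentralPurifyNode=1-s_u$, hence $s_v=s_u$; and if $s_w\in(0,1)$, your own backward-propagation argument pins $u,\OtherPurifyNode{2},\dots,\OtherPurifyNode{2l}$ into $B$, so $s_\CentralPurifyNode=0$ and therefore $s_v=1$ directly. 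No second chain, no feedback edge, no contradiction, and the case analysis you anticipated disappears. With this one simplification your sketch is already the paper's proof.
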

\begin{proof}
	We show that the problem is \ppad/-hard for any constant $ \eps < \min\{p, 1-p\} $ through a reduction from \pcircuit with the use of the \NOR and \PURE gates.

    \textbf{Construction.}
	Consider an instance $(V,G)$ of \pcircuit. For each node $u \in V$ we introduce a corresponding node $u \in V'$, where $V'$ is the set of nodes in the public goods graph. Each of the gates in the \pcircuit instance is implemented by adding corresponding edges and additional auxiliary nodes as shown in \cref{fig:gadgets-pgg}. The \NOR gadget is very simple, and it only introduces two edges between the corresponding nodes. The \PURE gadget is a generalization of a similar construction appearing in \cite{DeligkasFHM22-pure-circuit} in the context of threshold games. It introduces additional auxiliary nodes $\CentralPurifyNode, \OtherPurifyNode{1}, \dots, \OtherPurifyNode{2l+1}$, as well as the edges denoted in the figure. We set
 $$ l = \left\lceil \frac{\log(p-\eps)}{\log(p + \eps)} \right\rceil. $$

	Let $ \bfs $ be an \wn{\eps} of the public goods game. For any $u \in V'$, $  s_u \in [0,1] $ denotes the probability of player/node $u$ producing the good. We obtain an assignment $ \bfx $ to the \pcircuit instance as follows:
	\[
	\forall u \in V, \ \bfx[u] = \begin{cases}
		s_u & \text{if}\ s_u \in \zerone{} \\
		\bot & \text{otherwise.}
	\end{cases}
	\]

    \textbf{Correctness.}
	We will now demonstrate that the assignment $ \bfx $ obtained in this manner satisfies the \pcircuit gate constraints.
	
	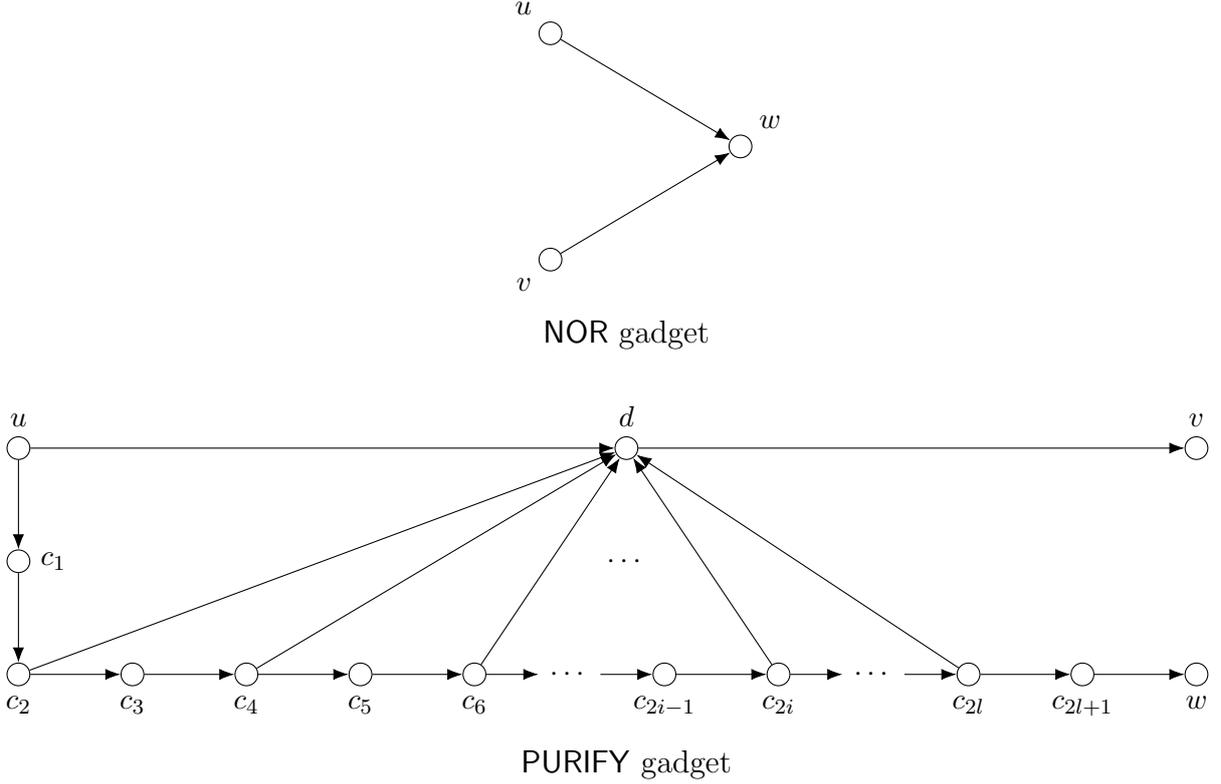
\begin{figure}
		\centering
  \begin{tikzpicture}[roundnode/.style={circle, draw=black, inner sep=0, minimum size=3mm}]

 	\node[roundnode,label=above left:{$u$}] (NORu) at (7,7.5) {};
	\node[roundnode,label=below left:{$v$}] (NORv) at (7,4.5) {};
	\node[roundnode,label=above right:{$w$}] (NORw) at (9.5,6) {};
	
	\draw[-{Latex[length=2mm]}] (NORu) -- (NORw);
	\draw[-{Latex[length=2mm]}] (NORv) -- (NORw);

 	\node at (8,3.5) {\large \NOR gadget};
	
	\node[roundnode,label=above:{$u$}] (PUREu) at (0,2) {};
	
	\node[roundnode,label=right:{$\OtherPurifyNode{1}$}] (PUREa) at (0,0.5) {};
	\node[roundnode,label=below:{$\OtherPurifyNode{2}$}] (PUREb) at (0,-1) {};
	\node[roundnode,label=below:{$\OtherPurifyNode{3}$}] (PUREc) at (1.5,-1) {};
	\node[roundnode,label=below:{$w$}] (PUREw) at (15.5,-1) {};
	\node[roundnode,label=below:{$\OtherPurifyNode{4}$}] (PUREc1) at (3,-1) {};
	\node[roundnode,label=below:{$\OtherPurifyNode{5}$}] (PUREc2) at (4.5,-1) {};
	\node[roundnode,label=below:{$\OtherPurifyNode{6}$}] (PUREc3) at (6,-1) {};
	\node[roundnode, draw=none] (PUREc4) at (7,-1) {};
	\node[roundnode, draw=none] (PUREc5) at (7.25,-1) {$\cdots$};
	\node[roundnode, draw=none] (PUREc6) at (7.5,-1) {};
	\node[roundnode,label=below:{$\OtherPurifyNode{2i-1}$}] (PUREc7) at (8.5,-1) {};
	\node[roundnode,label=below:{$\OtherPurifyNode{2i}$}] (PUREc8) at (10,-1) {};
	\node[roundnode,label=below:{$\OtherPurifyNode{2l+1}$}] (PUREw1) at (14,-1) {};
	\node[roundnode,label=below:{$\OtherPurifyNode{2l}$}] (PUREw2) at (12.5,-1) {};
	\node[roundnode, draw=none] (q_dots) at (8, 0.5) {$\cdots$}; 
	
	\node[roundnode, draw=none] (PUREw5) at (11,-1) {};
	\node[roundnode, draw=none] (PUREw4) at (11.25,-1) {$\cdots$};
	\node[roundnode, draw=none] (PUREw3) at (11.5,-1) {};
	
	\node[roundnode,label=above:{$\CentralPurifyNode$}] (PUREd) at (8,2) {};
	\node[roundnode,label=above:{$v$}] (PUREv) at (15.5,2) {};
	
	\draw[-{Latex[length=2mm]}] (PUREu) -- (PUREa);
	\draw[-{Latex[length=2mm]}] (PUREa) -- (PUREb);
	\draw[-{Latex[length=2mm]}] (PUREb) -- (PUREc);
	\draw[-{Latex[length=2mm]}] (PUREc) -- (PUREc1);
	\draw[-{Latex[length=2mm]}] (PUREc1) -- (PUREc2);
	\draw[-{Latex[length=2mm]}] (PUREc2) -- (PUREc3);
	\draw[-{Latex[length=2mm]}] (PUREc3) -- (PUREd);
	\draw[-{Latex[length=2mm]}] (PUREc3) -- (PUREc4);
	\draw[-{Latex[length=2mm]}] (PUREc6) -- (PUREc7);
	\draw[-{Latex[length=2mm]}] (PUREc7) -- (PUREc8);
	\draw[-{Latex[length=2mm]}] (PUREc8) -- (PUREd);

	\draw[-{Latex[length=2mm]}] (PUREw2) -- (PUREw1);
	\draw[-{Latex[length=2mm]}] (PUREw2) -- (PUREd);
	\draw[-{Latex[length=2mm]}] (PUREw1) -- (PUREw);
	
	\draw[-{Latex[length=2mm]}] (PUREc8) -- (PUREw5);
	\draw[-{Latex[length=2mm]}] (PUREw3) -- (PUREw2);
	
	\draw[-{Latex[length=2mm]}] (PUREu) -- (PUREd);
	\draw[-{Latex[length=2mm]}] (PUREb) -- (PUREd);
	\draw[-{Latex[length=2mm]}] (PUREc1) -- (PUREd);
	\draw[-{Latex[length=2mm]}] (PUREd) -- (PUREv);

    \node at (8,-2.2) {\large \PURE gadget};
\end{tikzpicture}
		\caption{The gadgets used for the public goods instance in \cref{thm:pgg-elementary}.}
		\label{fig:gadgets-pgg}
	\end{figure}
	
	\paragraph{\NOR gates.}
	A $ (\NOR, u,v,w) $ gate is simulated by the \NOR gadget depicted in \cref{fig:gadgets-pgg}.
	\begin{itemize}
		\item If $ s_u = s_v = 0 $, then in particular it means that $ \Delta u_w = (1-s_u)(1-s_v) - p = 1 - p > \eps $. This implies that action $ 1 $ is the only pure \eps-best-response for $ w $, and therefore in an \wn{\eps} we will have that $ s_w = 1 $.
		\item If $ s_u = 1 $ or $ s_v = 1 $, then we have that $ \Delta u_w = (1-s_u)(1-s_v) - p = - p < -\eps $. This implies that action $ 0 $ is the only pure \eps-best-response for $ w $, and therefore in an \wn{\eps} we will have that $ s_w = 0 $.
		\item In any other remaining case, whatever strategy $ w $ plays will map back to a value satisfying the \NOR gate.
	\end{itemize}
	
		\paragraph{\PURE gates.}
    A $ (\PURE, u,v,w) $ gate is simulated by the \PURE gadget depicted in \cref{fig:gadgets-pgg}.
	\begin{itemize}
		\item  If $ s_u \in \{0,1\} $, then for all $ j \in [l] $ it holds that $ s_{\OtherPurifyNode{2j}} = s_u $. This is easy to show by repeatedly applying a similar argument to the one used above for the \NOR gadget. Similarly, we also deduce that $ s_w = s_u $. Furthermore, it holds that:
		\[ \Du{\CentralPurifyNode} = (1-s_u)\prod_{j \in [l]}(1-s_{\OtherPurifyNode{2j}}) - p  =(1-s_u)^{l+1} - p = \begin{cases}
			1-p \ > \eps & \text{if } s_u = 0 \\
			\ -p \quad < -\eps & \text{if } s_u = 1 
		\end{cases}. \]
	It follows from this that $s_\CentralPurifyNode \in \{0,1\}$ and $ s_\CentralPurifyNode = 1 - s_u$, and thus $ s_v = s_u $. Together with $ s_w = s_u $, this means that the gate constraint is satisfied for this case.  
	\item If $ s_u \notin \zerone{} $, then we show that if $ s_w \notin \zerone{} $ it must hold that $ s_v = 1 $. Note that if $ s_w \notin \zerone{} $, then we must have that $ \Du{w} \in [-\eps, \eps] $. This in turn implies that $ s_{\OtherPurifyNode{2l+1}} \in [1-p - \eps, 1-p + \eps] \subset (0,1) $, since $ \eps < \min\{p, 1-p\} $. By repeatedly applying this argument we deduce that $ s_u, s_{\OtherPurifyNode{2j}} \in  [1-p - \eps, 1-p + \eps]  $, for all $ j \in [l] $. As a result, we have that:
	\[
	\Du{\CentralPurifyNode} = \underbrace{(1-s_u)\prod_{j \in [l]}(1-s_{\OtherPurifyNode{2j}})}_{\in \left [(p- \eps)^{l+1}, (p + \eps)^{l+1}\right ]} - p < (p + \eps) ^ l - p \leq (p + \eps) ^ {\frac{\log(p-\eps)}{\log(p + \eps)} } - p = -\eps. 
	\]
    where we used the fact that $0 < p - \eps \leq p + \eps < 1$.
	It follows from above that $ 0 $ is the only pure $\eps$-best-response for $ \CentralPurifyNode $, and therefore $ s_\CentralPurifyNode = 0 $, which in turn implies that $ s_v = 1 $, as desired.
	\end{itemize}

As shown above, we have that an \wn{\eps} in the game correctly encodes the gate constraints imposed by the \NOR and \PURE gates.
\end{proof}

\section{Open Problems}

We conclude with some natural open questions that arise from our work:
\begin{itemize}
    \item Is it possible to obtain similar tight inapproximability results for $\eps$-Nash equilibria?
    \item The problem of computing an \emph{exact} Nash equilibrium lies in the class \fixp/ introduced by Etessami and Yannakakis~\cite{EtessamiY10-FIXP}. Is it \fixp/-complete? Or does the problem always admit rational equilibria?
    \item 
    Is it possible to prove the hardness result with the use of a gadget whose size does not depend on how close $\eps$ is to $\min\{p,1-p\}$? In particular, is it possible to prove hardness for the problem with $p=1/2$ and for any $\eps < 1/2$ for graphs with \emph{bounded} degree? Are positive results possible?

\end{itemize}

\bigskip

\subsubsection*{Acknowledgements}
We thank Mika Göös for helpful discussions in the early stages of this project.
The second author was supported by the Swiss State Secretariat for Education, Research and Innovation (SERI) under contract number MB22.00026.

\bibliographystyle{alphaurl}
\bibliography{references}

\end{document}